\newtheorem{theorem}{Theorem}[section]
\newtheorem{lemma}{Lemma}[section]
\begin{document}

\author{Asghar Asgharian Sardroud \thanks{Email: asgharian@aut.ac.ir}
 \and  Alireza Bagheri \thanks{Email:ar\_bagheri@aut.ac.ir}\\
 Department of Computer Engineering \& IT,\\ Amirkabir University of Technology, Tehran, Iran.
 }
\title{Unit-length embedding of cycles and paths on grid graphs}


\maketitle
\abstract{
Although there are very algorithms for embedding graphs on unbounded grids,
only few results on embedding or drawing graphs on restricted grids has been published.
In this work, we consider the problem of embedding paths and cycles on grid graphs.
We give the necessary and sufficient conditions for the existence of cycles of given length $k$ and paths of given length $k$ between two given vertices in $n$-vertex rectangular grid graphs and
introduce two algorithms with running times O$(k)$ and O$(k^2)$ for finding respectively such cycles and paths.
Also, we extend our results to $m\times n\times o$ 3D grids.
Our method for finding cycle of length $k$ in rectangular grid graphs also introduces a
linear-time algorithm for finding cycles of a given length $k$ in hamiltonian solid grid graphs.

\textbf{Keywords:}Grid graphs, Rectangular grid graphs, 3D grids, Hamiltonian path, Hamiltonian cycle, Longest Path.
}

\section{Introduction}\label{sec:intro}

In the area of graph drawing, a lot of research has been done on drawing or embedding various classes of graphs  on grids. Most of the research thus far, has considered unbounded grids, and in some cases tried to minimize the area used for drawing the given graph, and
only a few results deal with drawing or embedding graphs on restricted grids.
Felsner et al. \cite{felsner03straight} considered the problem of drawing graphs on $k$-tracks, i.e. $\infty\times k$ grids. They characterized  the 2-track drawable trees and devised a linear-time algorithm for constructing such drawings.
Cornelsen et al. \cite{cornelsen2006drawing} proposed a polynomial-time algorithm for
planar drawing of an arbitrary graph on two parallel lines.
Their results show that, only outer planar graphs may have such a drawing.
Suderman studied the various types of drawings of a tree on $\infty\times k$ grids \cite{suderman2004pathwidth}.
Using the concept of pathwidth, he worked out the optimal lower and upper bounds on the minimum value of $k$ for a tree in order to be drawable on a $\infty\times k$ grid.

In this paper, we consider unit-length drawing of cycles and paths in bounded grids, i.e. in grid graphs.
Grid graphs are finite vertex-induced subsets of infinite integer grids and they are widely used
in various areas of computer science
with applications in
VLSI layout design (e.g. \cite{hu2001survey,kastner2000predictable}), robotics and path planning (e.g. \cite{arkin2001optimal,arkin2000approximation}), computer graphics, overlay networks and parallel processing.

The problem of finding paths and cycles of a given length is the general case of the problems of finding Hamiltonian and longest paths and cycles.
These well-known problems are NP-hard for both general graphs \cite{diestel2010graph,garey1979computers} and grid graphs \cite{itai1982hamiltonian}.
However, there are polynomial-time algorithms for these problems on some subclasses of grid graphs. For example,
Umans et al. \cite{umans1997hamiltonian} proposed a polynomial-time algorithm for finding Hamiltonian cycles in solid grid graphs. Solid grid graphs are a subset of grid graphs which have no holes, i.e. any unbounded face in these graphs has unit area, but the polynomiality of Hamiltonian path, longest path and longest cycle problems are still open problems for this class of graphs.
Recently, Zhang and Liu \cite{zhang2011approximating} proposed a  $\frac{5}{6}$-approximation algorithm for the longest path problem in grid graphs that possess square free 2-factors.
A square free 2-factors is a 2-regular spanning subgraph in which any component has more than 4 vertices.
In \cite{salman2003spanning} salman defined 26 classes of grid graphs called alphabet graphs and determined which of these classes of graphs has Hamiltonian cycles.

Itai et al. \cite{itai1982hamiltonian} proposed a linear-time algorithm for finding a Hamiltonian path between two given vertices of a rectangular grid graph.
Rectangular grid graphs are grid graphs with the vertex set of all integer-coordinate points inside an axis-parallel rectangle.
For this problem, Chen et al. \cite{dong2002efficient} introduced a parallel algorithm that can find a Hamiltonian path between two given vertices of a rectangular grid graph in constant time, if there is a processor for each vertex of the graph.
Keshavarz et al. \cite{keshavarz2010longestingrid} considered the longest path problem in rectangular grid graphs and introduced a linear-time algorithm for finding a longest path between two given vertices in a rectangular grid graph.

The problems of Hamiltonian path, Hamiltonian cycle, longest path and  longest cycle are widely studied in the case of grid graphs.
Yet, we are not aware of any results on finding paths and cycles of a given length in these graphs.
We present a
linear-time algorithm for constructing a cycle of a given length in rectangular and Hamiltonian solid grid graphs, and a polynomial-time algorithm for finding a path of a given length between two specific vertices of a rectangular grid graph.

The remainder of this paper is organized as follows. Section \ref{sec:preli}, contains some necessary concepts and definitions. In Sections \ref{sec:cycle-k} and \ref{sec:path-k}, the necessary and sufficient conditions are given for the existence of respectively cycles of length $k$ and paths of length $k$ between two given vertices in a rectangular grid graph.
Although the proofs given in Sections \ref{sec:cycle-k} and \ref{sec:path-k} are constructive and implicitly give algorithms for finding desired cycles or paths, since these algorithms are not efficient, Section \ref{sec:algs} introduces two efficient algorithms for these problems.
In Section \ref{sec:3d}, we extend our results to 3D grids and
Section \ref{sec:concl} concludes the paper.

\section{Preliminaries} \label{sec:preli}

The \emph{two-dimensional integer grid} $G^\infty$ is an infinite graph
with the vertex set of all the points of the Euclidean plane that have integer coordinates.
In this graph, there is an edge between any two vertices of unit distance.
A \emph{grid graph} is a finite vertex-induced subgraph of $G^\infty$.
For a vertex $v$ of $G^\infty$, we use $x(v)$ and $y(v)$ to denote
the $x$ and $y$ coordinates of its corresponding point in the Euclidean plane.
Furthermore, we color $v$ \emph{white} if $x(v)+y(v)$ is even; otherwise, we color it \emph{black}.
Grid graphs are 2-colorable, because there is no edge between white and black vertices in these graphs.
For a grid graph $G_g$, its \emph{size} $|G_g|$ is considered to be the number of its vertices. Similarly
for a path or cycle $X$, its \emph{length} $|X|$ is equal to the number of its vertices.

In a 2-colorable graph any cycle or path
alternates between black and white vertices, so we have the following lemma.

\begin{lemma} \label{lem:odd-even-len}
In a grid graph, any cycle and any path between two different-colored vertices has even length,
and any path between two same-colored vertices has odd length.
\end{lemma}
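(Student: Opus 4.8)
The plan is to use the two-coloring of grid graphs directly. Recall that every vertex $v$ is colored according to the parity of $x(v)+y(v)$, and that edges only join a white vertex to a black vertex. The key observation is that traversing a single edge always flips the color, so the color of a vertex is determined by the parity of its distance (in number of edges) from any starting vertex along a walk. I would make this precise and then read off each of the three claims as a parity statement.

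First I would fix notation for lengths. Since the paper defines the length of a path or cycle as its number of \emph{vertices}, I must be careful to translate between vertices and edges: a path on $t$ vertices has $t-1$ edges, and a cycle on $t$ vertices has $t$ edges. Consider a path $v_0, v_1, \ldots, v_{t-1}$. Because consecutive vertices are adjacent, each step toggles the color, so the color of $v_i$ agrees with that of $v_0$ exactly when $i$ is even. In particular the endpoints $v_0$ and $v_{t-1}$ have the same color iff $t-1$ is even, i.e.\ iff $t$ is odd, and they have different colors iff $t$ is even. This immediately yields the two path statements: a path between two different-colored vertices has even length, and a path between two same-colored vertices has odd length.

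Next I would handle the cycle case. For a cycle $v_0, v_1, \ldots, v_{t-1}, v_0$ on $t$ vertices, the same toggling argument shows the color of $v_0$ (as reached around the cycle) must match itself after $t$ steps; equivalently, the closed walk returns to a vertex of the starting color only after an even number of edges. Since the cycle has $t$ edges, closing up forces $t$ to be even, so every cycle has even length. Alternatively, I can view the cycle as a path from $v_0$ back to $v_0$ (two same-colored endpoints) with an extra edge, and combine the parity with the adjacency requirement $v_0 \sim v_{t-1}$ to conclude $t$ is even.

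I do not expect any serious obstacle here; the content is entirely a parity bookkeeping argument on a bipartite (2-colorable) graph, and the only thing demanding care is the vertex-versus-edge length convention introduced in the preliminaries. The cleanest presentation is probably to state once that each edge reverses the color and then apply it uniformly to paths and to the closed walk of a cycle, so I would aim to write the proof in that unified style rather than treating three separate cases from scratch.
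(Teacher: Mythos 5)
Your proposal is correct and takes the same approach as the paper, which justifies the lemma solely by noting that in a 2-colorable graph every path and cycle alternates between black and white vertices. Your write-up is simply a more explicit version of that alternation argument, with the added (and welcome) care about the paper's vertex-count convention for length.
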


Grid graph $G_g$ is \emph{solid} if $G^\infty \setminus G_g$ is connected,
where $G^\infty \setminus G_g$ denotes the graph obtained from $G^\infty$ by removing vertices
of $G_g$ together with their incident edges.
A \emph{rectangular grid graph}, $R(m,n)$
(or $R$ for short), is a solid grid graph whose vertex set is
$V(R)=\{v \ |\ 1 \leq x(v)\leq m,\ 1\leq y(v)\leq n\}$.
Similarly, 3D grid graph $R(m,n,o)$ can be defined by the vertex set $V(R)=\{v \ |\ 1 \leq x(v)\leq m,\ 1\leq y(v)\leq n, 1 \leq z(v)\leq o\}$.

In the following, $R$ denotes the given rectangular grid graph in which we would like to find path $P$ between two vertices $s$ and $t$, or cycle $C$ of the given length $k$.
In the figures, the vertices are illustrated by bullets, the edges of $P$ or $C$
by solid lines and the other edges of $R$ by dotted lines.
The paths are assumed to be directed from $s$ to $t$ and the cycles are assumed to be directed in clockwise direction.
Based on the direction of the edges, we have four type of edges, i.e. \emph{up}, \emph{down}, \emph{left} and \emph{right} edges.

Our main idea for finding paths or cycles of length $k$ is to start from a longest path or cycle and decrease its length to $k$ by contracting some parts of it called \emph{caves}.
A cave $c$ in a path $P$ or cycle $C$ is a minimal sub path of $P$ or $C$  such that its first and last edges are in opposite directions (see Figure \ref{fig:cave}(a)).
We use letters $p$ and $q$ to refer to the start and end vertex of cave $c$ respectively. All the edges and vertices between $p$ and $q$ (i.e. the edges and the vertices of $R$ in the shortest path between $p$ and $q$ excluding $p$ and $q$) are considered to be \emph{inside} the cave.
A cave is called \emph{contractible} if none of the vertices inside it are in $P$ or $C$.
Figure \ref{fig:cave}(a) shows part of a rectangular grid graph containing a non-contractible cave between two vertices $p$ and $q$. \emph{Contracting} a contractible cave is done by replacing the cave with the shortest path between $p$ and $q$ as shown in Figure \ref{fig:cave}(b).
The cave $c$ of cycle $C$ is \emph{convex cave} if the edges and vertices inside $c$ also lie inside $C$; otherwise, it is a \emph{concave cave}.
\begin{figure*}
\begin{center}
\footnotesize\centering
\centerline{
\includegraphics[scale=1]{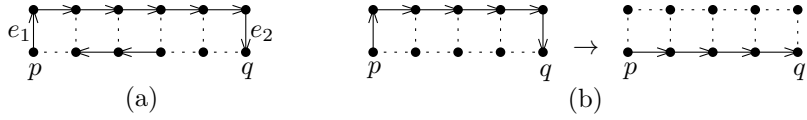}}
\caption{(a) A non-contractible cave (b) Contracting a cave}
\label{fig:cave}
\end{center}
\end{figure*}%
\section{Cycles of length $k$} \label{sec:cycle-k}
In this section, we give the necessary and sufficient conditions for the existence of cycles of length $k$ in $R$.
First, we find the length of the longest cycles of a grid graph $R$. Then using the concept of contractible convex caves we show that for any even $k>4$ that is not larger than the length of the longest cycles of $R$, there is a cycle of length $k$ in $R$.
The following lemma gives the length of the longest cycles of $R$, and shows how we can find such cycles.
\begin{lemma} \label{lem:longest-cycle}
The length of the longest cycles of a rectangular grid graph $R(m,n)$ is $m\times n$ when $m\times n$ is even, and $m \times n -1$ otherwise.
\end{lemma}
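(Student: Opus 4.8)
The plan is to split into the two parity cases and, in each, prove a matching upper bound and then give a construction attaining it. For the upper bound, note first that $R(m,n)$ has exactly $mn$ vertices, so no cycle can be longer than $mn$; this already settles the even case from above. When $mn$ is odd both $m$ and $n$ are odd, and by Lemma~\ref{lem:odd-even-len} every cycle of a grid graph has even length. Since $mn$ is odd, a cycle on all $mn$ vertices is impossible, so the longest cycle has length at most $mn-1$. (Equivalently, the two colour classes differ in size by one, and a cycle meets them equally often.)

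For the lower bound in the even case I would exhibit a Hamiltonian cycle. Assuming without loss of generality that $n$ is even (otherwise transpose the grid), the standard boustrophedon/comb pattern --- run the bottom row across, sweep the remaining rows column by column in alternating up/down strokes, and close the loop along a reserved column --- visits every vertex exactly once, giving a cycle of length $mn$.

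The odd case is the heart of the argument, where I must produce a cycle of length exactly $mn-1$, i.e.\ one missing a single vertex. I would peel off the top row: the subgrid $R(m,n-1)$ has even area (as $n-1$ is even), so it has a Hamiltonian cycle $H$, which I arrange to traverse its entire top row $(1,n-1)-(2,n-1)-\cdots-(m,n-1)$ horizontally. Then, for each of the disjoint edges $(2i-1,n-1)-(2i,n-1)$ with $1\le i\le (m-1)/2$, I replace that edge by the ``bump'' detour $(2i-1,n-1)-(2i-1,n)-(2i,n)-(2i,n-1)$. Each detour is a local modification that keeps the walk a single simple cycle and absorbs two fresh vertices of row $n$; after all of them the cycle has picked up $(1,n),\dots,(m-1,n)$ while leaving out the corner $(m,n)$. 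Its length is $m(n-1)+(m-1)=mn-1$, as required.

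The main obstacle is realising the Hamiltonian cycle $H$ of $R(m,n-1)$ with its top row fully horizontal. Closing such a comb reduces to finding a Hamiltonian path of the odd$\times$odd block $R(m,n-2)$ between its two top corners $(m,n-2)$ and $(1,n-2)$; these are same-coloured (the majority colour), so naive single-direction snakes end at the wrong corner and a ``double-back'' route is needed. I expect to supply this path either by an explicit construction or by invoking the Hamiltonian-path characterisation of Itai et al.~\cite{itai1982hamiltonian}, after checking that two same-side corners of an odd$\times$odd rectangle are not one of their excluded configurations. The remaining routine checks are that the even-case comb and the odd-case detours indeed yield simple cycles and that $m,n\ge 2$, so that cycles exist at all.
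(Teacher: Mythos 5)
Your proposal is correct and is essentially the paper's own proof: the paper likewise combines the 2-colourability parity bound (Lemma~\ref{lem:odd-even-len}) with explicit longest-cycle patterns, which it simply exhibits as Figure~\ref{fig:longest-cycle}(a) and (b) rather than deriving in words. Two small remarks: your ``main obstacle'' is no obstacle at all --- the comb for $R(m,n-1)$ that runs along the bottom row, zigzags row by row across columns $2,\dots,m$, and returns down the reserved column $1$ already traverses the entire top row horizontally, so your bump detours attach directly and no appeal to Itai et al.\ is needed; also, a comb built from vertical up/down strokes closes up precisely when the number of \emph{columns} is even, so under your assumption that $n$ (the number of rows) is even you should sweep row by row instead (or transpose), a slip your stated WLOG already absorbs.
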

\begin{proof}
If $m\times n$ is even then $m$ or $n$ is even and we can find a cycle of size $m\times n$ as shown in Figure \ref{fig:longest-cycle}(a). If $m\times n$ is odd then both of $m$ and $n$ are odd and we can find a cycle of size $m\times n-1$ as shown in Figure \ref{fig:longest-cycle}(b). Because $R$ is 2-colorable, any cycle in $R$ has even length so these cycles are the longest.
\end{proof}

\begin{figure}[tb]
\begin{center}
\footnotesize\centering
\centerline{
\includegraphics[scale=.8]{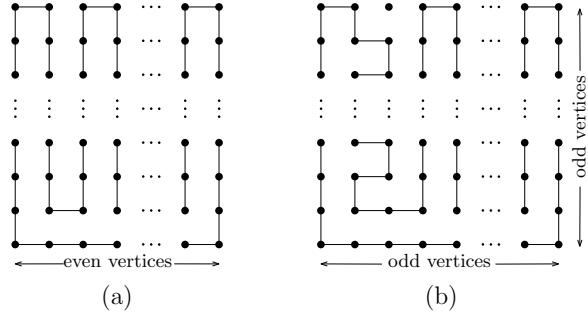}}
\caption{ Pattern of a longest cycle in (a) even-sized and (b) odd-sized rectangular grid graphs}
\label{fig:longest-cycle}
\end{center}
\end{figure}

Any cycle $C$ in a solid grid graph $G_g$ can be considered a simple orthogonal polygon. The vertices of this orthogonal polygon are those vertices of $C$ in which the direction of the edges of $C$ change.
An \emph{orthogonal ear} in an orthogonal polygon is four consecutive vertices $v_1,v_2,v_3$ and $v_4$ such that the line segment connecting $v_1$ to $v_4$ lies completely inside the polygon.
Two orthogonal ears are  non-overlapping, if the insides of the two quadrilaterals defined by these ears are disjoint.
\begin{lemma} \label{lem:cycle-cave}
Let $C$ be a cycle in a solid grid graph $G_g$, and $e$ be one of its edges. Then $C$ contains a contractible convex cave $c$ not containing $e$.
\end{lemma}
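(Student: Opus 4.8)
The plan is to translate the statement into a fact about the simple orthogonal polygon $P$ obtained by regarding $C$ as an axis-parallel closed curve, and to read off the desired cave from an \emph{orthogonal ear} of $P$. First I would make the cave/ear dictionary precise. A minimal sub-path of $C$ whose first and last edges point in opposite directions is exactly a maximal straight stretch in direction $d$, followed by a perpendicular tip, followed by a straight stretch in direction $-d$, flanked by two consecutive $90^\circ$ turns of the same rotational sense; that is, a \emph{tab} of $P$, whose first and last vertices are the cave's $p$ and $q$. Minimality forbids any internal structure, so a minimal cave is a single tab and nothing more. The cave is \emph{convex} precisely when the tab points into the exterior of $P$, so that the chord $pq$ and the grid points of a shortest $p$--$q$ path lie in the interior of $P$; this is exactly the orthogonal-ear condition on the four vertices $v_1=p,v_2,v_3,v_4=q$. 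Hence convex caves not containing $e$ are in bijection with orthogonal ears none of whose three edges is $e$.

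Next I would note that a convex cave is automatically contractible. Since $C$ is a simple closed curve, by the Jordan curve theorem it is the whole boundary of $P$, so no vertex in the open interior of $P$ can lie on $C$. For a convex cave the vertices inside it lie in that open interior, hence none of them is on $C$, which is the contractibility condition. This reduces the lemma to the purely combinatorial claim that $P$ has an orthogonal ear avoiding $e$.

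To produce such an ear I would first guarantee at least two \emph{non-overlapping} orthogonal ears. Write $C_v$ and $R_v$ for the numbers of convex and reflex vertices of $P$; for a simple orthogonal polygon one has $C_v-R_v=4$. A pair of consecutive convex corners is two equal-sense turns, i.e. an outward tab, and its minimal version keeps the chord inside $P$, so it is an orthogonal ear. If $R_v=0$ then $P$ is a rectangle and two opposite corners give two non-overlapping ears. If $R_v\ge 1$, then since a run of that many equal-sense turns would make the boundary spiral and self-intersect, the $C_v=R_v+4\ge 5$ convex corners cannot all lie in a single cyclic run; hence there are at least two maximal runs of convex corners, and one minimal tab chosen inside each of two distinct runs yields two ears with disjoint interiors. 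Finally I would rule out $e$: two ears with disjoint interiors cannot share a boundary edge, because the interiors of two outward tabs abutting a common edge would both fill the grid strip immediately interior to that edge and so intersect, contradicting non-overlap. Thus $e$ can lie on at most one of the two chosen ears, and the other is the required contractible convex cave $c$.

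The hard part will be the geometric bookkeeping inside the combinatorial step: checking that a \emph{minimal} tab really yields a chord contained in the open interior of $P$ (so the cave is both convex and, via the Jordan-curve argument, contractible), and making the run-counting robust when convex and reflex corners interleave, including the degenerate base cases of the unit square and thin $2\times n$ cycles where no contraction is meaningful. Once the cave/ear dictionary and the implication ``non-overlapping $\Rightarrow$ edge-disjoint'' are in place, the remaining steps should be routine.
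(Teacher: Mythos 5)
Your overall route matches the paper's: identify contractible convex caves with orthogonal ears of the orthogonal polygon bounded by $C$, obtain two non-overlapping ears, and discard the (at most one) ear meeting $e$. The paper, however, obtains the two ears by citing the known theorem that every simple orthogonal polygon with more than four vertices has at least two non-overlapping ears \cite{orourke1987art}; you instead try to prove this from scratch, and both pillars of your counting argument are false. Take the cycle $C$ in the rectangular grid graph $R(4,4)$ through $(1,1),(4,1),(4,4),(3,4),(3,2),(2,2),(2,4),(1,4)$, i.e. the boundary of a staple ($\Pi$) shape with two width-one arms. It has six convex corners and two reflex corners, so $R_v\ge 1$; but the two reflex corners $(3,2)$ and $(2,2)$ are cyclically adjacent, so all six convex corners lie in a \emph{single} cyclic run. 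Hence your claim that $R_v\ge 1$ forces at least two maximal runs of convex corners is wrong, and so is its justification: six consecutive equal-sense turns do not make the boundary self-intersect, because later reflex turns compensate. The claim that every pair of consecutive convex corners yields an ear also fails on the same example: the consecutive convex pair $(1,4),(1,1)$ gives the minimal tab $(2,4),(1,4),(1,3),(1,2),(1,1),(2,1)$, whose chord $x=2$, $1\le y\le 4$ runs along the opposite arm of the staple and contains the cycle vertices $(2,2)$ and $(2,3)$; it is neither an ear nor a contractible cave (the pairs $(1,1),(4,1)$ and $(4,1),(4,4)$ fail the same way). The only two ears of this polygon sit at the tops of the two arms, both inside the single convex run, so neither of your selection mechanisms locates them.

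So the combinatorial core of the lemma --- the existence of two non-overlapping ears --- is exactly the part your argument leaves unproved, and it is exactly the part the paper outsources to \cite{orourke1987art}; the simplest repair is to invoke that theorem (a correct self-contained proof needs substantially more than run-counting, e.g. induction on the number of reflex vertices). Two smaller remarks. First, your step ``convex implies contractible'' is valid only under your strict reading of convexity (chord in the open interior of the polygon); under the paper's definition a convex cave need not be contractible --- this is precisely the situation of Figure \ref{fig:general-convex-cave} that Algorithm \ref{alg1} must handle --- but this does not hurt you, since all you need is that an ear yields a contractible convex cave. Second, your closing argument that two ears with disjoint interiors cannot share a polygon edge (so at most one of them can contain $e$) is sound, and it fills in a step the paper itself only asserts.
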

\begin{proof}
Without loss of generality we assume that the orthogonal polygon corresponding to $C$ has more than four vertices, otherwise the lemma is trivial.
We know that every orthogonal polygon with at least five vertices has at least two non-overlapping orthogonal ears \cite{orourke1987art}.
Any orthogonal ear, in the orthogonal polygon, corresponds to a contractible convex cave in $C$. Thus, $C$ should contain at least two contractible convex caves, and one of these two caves does not contain the edge $e$.
\end{proof}

\begin{theorem} \label{thm:cycle-k}
 In a rectangular grid graph $R(m,n)$ there is a cycle of length $k$ if and only if $m,n>1$, $k$ is even and $4\leq k \leq m \times n$.
\end{theorem}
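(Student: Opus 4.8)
The plan is to prove the two directions separately. Necessity is immediate from the earlier lemmas: suppose $R(m,n)$ has a cycle of length $k$. If $m=1$ or $n=1$ then $R$ is a single path and contains no cycle, so $m,n>1$. By Lemma~\ref{lem:odd-even-len} the cycle has even length, so $k$ is even; since any cycle has at least three vertices and $k$ is even, $k\ge 4$; and by Lemma~\ref{lem:longest-cycle} no cycle of $R$ is longer than $m\times n$, so $k\le m\times n$. This settles the ``only if'' direction.

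For sufficiency, assume $m,n>1$, $k$ even, and $4\le k\le m\times n$. Let $L$ be the length of a longest cycle of $R$; by Lemma~\ref{lem:longest-cycle}, $L=m\times n$ when $m\times n$ is even and $L=m\times n-1$ otherwise, so $L$ is even and, because $k$ is even with $k\le m\times n$, we have $k\le L$. I would start from a longest cycle $C_0$ of length $L$ (the one exhibited in Figure~\ref{fig:longest-cycle}) and then repeatedly contract a contractible convex cave, maintaining a simple cycle of $R$ throughout, until the length drops to exactly $k$.

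The crux is the claim that one cave contraction lowers the length by exactly $2$. Since a cave is a minimal subpath whose first and last edges point in opposite directions $d$ and $-d$, minimality forces its edge sequence to be a single $d$-step, then $j\ge 1$ steps in one fixed perpendicular direction, then a single $-d$-step: a second $d$-step, or a reversal among the middle edges, would expose a strictly shorter subpath already having opposite end-edges. Its endpoints $p$ and $q$ therefore differ only in the perpendicular direction, by $j$; the cave has $j+1$ interior vertices, while the shortest $p$--$q$ path has $j-1$ interior vertices, so contraction replaces the former set by the latter, a net loss of $2$ vertices. Convexity places the replacement vertices inside $C$ and hence in $R$, and contractibility ensures they are not already on the cycle, so the contracted object is again a simple cycle of $R$.

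Finally, a cave must be available at every step until we hit $k$. As long as the current length exceeds $k\ge 4$ it is at least $6$, so the cycle has more than four vertices and Lemma~\ref{lem:cycle-cave} supplies a contractible convex cave; since $L-k$ is even and each step removes exactly two vertices, $(L-k)/2$ contractions produce a cycle of length exactly $k$. I expect the main obstacle to be precisely the interplay of these last two facts in the boundary case where the current cycle is a long thin rectangle: its orthogonal polygon has only four corners, so the two-ears argument underlying Lemma~\ref{lem:cycle-cave} does not apply directly, and one must verify by hand that a unit bump on any side of length at least two is a contractible convex cave. Securing this case, together with the exact decrease-by-two count, is what guarantees the descent from $L$ down to $k$ proceeds without overshooting or getting stuck above $k$.
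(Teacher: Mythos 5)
Your proof is correct and follows essentially the same route as the paper: necessity from $2$-colorability and Lemma~\ref{lem:longest-cycle}, sufficiency by starting from a longest cycle and repeatedly contracting contractible convex caves supplied by Lemma~\ref{lem:cycle-cave}. The additional details you work out---the exact decrease-by-two count for a contraction and the four-corner thin-rectangle boundary case---are points the paper simply asserts or delegates to the ``trivial'' case in the proof of Lemma~\ref{lem:cycle-cave}, so they flesh out rather than alter the argument.
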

\begin{proof}
Clearly if $m=1$ or $n=1$ there is no cycle in $R$. Because $R$ is 2-colorable any cycle in $R$ should have even length. Furthermore, based on Lemma \ref{lem:cycle-cave} any cycle in $R$ has a contractible cave such that its contraction reduces the length of the cycle by two.
Therefore starting from the longest cycle of $R$ (obtained by Lemma \ref{lem:longest-cycle}) and by repeatedly finding and contracting such caves we can construct a cycle of length $k$ for any even $k$ satisfying $4\leq k \leq m \times n$. Note that there is no cycle of length less than four in $R$.
\end{proof}

Lemma \ref{lem:cycle-cave} has more general consequences. For example, this lemma shows that in a Hamiltonian solid grid graph with $n$ vertices there is a cycle of length $k$ for any even $k$ satisfying $4\leq k\leq n$.

\section{Paths of length $k$}\label{sec:path-k}
In this section, we use a similar approach for giving the necessary and sufficient conditions for the existence of paths of length $k$ between two given vertices in a rectangular grid graph. We show that any path between $s$ and $t$ that is not a shortest path has a contractible cave. Thus, starting from a longest path between $s$ and $t$ we can construct a path of any desired length.

A path from $s$ to $t$ is \emph{monotone} if it does not contain two edges in opposite directions i.e. it does not contain both up and down or both left and right edges.
\begin{lemma}\label{lem:monotone-shortest}
In a rectangular grid graph, a path between any two vertices $s$ and $t$ is shortest if and only if it is monotone.
\end{lemma}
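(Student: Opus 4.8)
The plan is to relate the number of edges of any $s$--$t$ path to the Manhattan distance $d(s,t)=|x(s)-x(t)|+|y(s)-y(t)|$ and to pin down exactly when that bound is attained. First I would classify the edges of an arbitrary path $P$ from $s$ to $t$ by direction, writing $a$, $b$, $c$, $d$ for the numbers of right, left, up, and down edges respectively. Since each right edge increases the $x$-coordinate by one, each left edge decreases it by one, and vertical edges leave $x$ unchanged, telescoping the coordinate changes along $P$ gives $a-b=x(t)-x(s)$ and, symmetrically, $c-d=y(t)-y(s)$. The number of edges of $P$ is $a+b+c+d$ (one less than its length in the paper's vertex count, so minimizing one minimizes the other), and the key estimate is $a+b+c+d\ge |a-b|+|c-d|=d(s,t)$, obtained from $a+b\ge|a-b|$ and $c+d\ge|c-d|$.

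The crucial observation is the equality condition. We have $a+b=|a-b|$ precisely when $a=0$ or $b=0$, and likewise $c+d=|c-d|$ precisely when $c=0$ or $d=0$. Hence $P$ attains the lower bound $d(s,t)$ if and only if it contains no pair of opposite horizontal edges and no pair of opposite vertical edges, which is exactly the definition of a monotone path. This single computation delivers both implications at once, provided we know the bound $d(s,t)$ is actually achievable inside the graph.

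To finish I would invoke the rectangular structure of $R$. Because $V(R)$ consists of every integer point inside an axis-parallel rectangle, there is always a monotone path from $s$ to $t$ that stays inside $R$ (for instance, first step horizontally toward the column of $t$, then vertically toward $t$), so the shortest $s$--$t$ paths have length exactly $d(s,t)$. For the forward direction, a shortest path has $a+b+c+d=d(s,t)$ and is therefore monotone by the equality characterization. For the converse, a monotone path has length $d(s,t)$, which equals the length of a shortest path, and so is itself shortest.

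The main obstacle, and the only place the rectangular hypothesis is genuinely needed, is guaranteeing that a monotone path exists inside the graph so that the lower bound $d(s,t)$ is realized. In a grid graph with holes this can fail: a shortest path may be forced to detour around a hole and thus be non-monotone while remaining shortest, which would break the equivalence. The edge-counting argument itself is routine; the care required is only in confirming that the monotone route never leaves the rectangle, which is immediate since $R$ contains \emph{all} lattice points of its bounding rectangle.
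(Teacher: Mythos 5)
Your proof is correct and takes essentially the same approach as the paper: both arguments count edges by direction and compare the total edge count of an arbitrary $s$--$t$ path to the Manhattan distance $|x(s)-x(t)|+|y(s)-y(t)|$, with monotonicity characterizing exactly when the bound is attained. The paper compresses this by a WLOG orientation (so only up and right edges need counting) and leaves the achievability of the bound implicit, whereas you keep the four directional counts symmetric and explicitly verify that a monotone path exists inside $R$; these are presentational differences, not a different method.
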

\begin{proof}
Without loss of generality assume that $x(s) \leq x(t)$ and $y(s) \leq y(t)$. Clearly, any path between $s$ and $t$ must contain at least $y(t)-y(s)$ up edges and at least $x(t)-x(s)$ right edges. Furthermore,
any monotone path between $s$ and $t$ should contain only up and right edges, so it can not contain more than
$y(t)-y(s)$ up edges and more than $x(t)-x(s)$ right edges. Therefore, any monotone path between $s$ and $t$ is a shortest path. On the other hand, any shortest path has only $y(t)-y(s)$ up edges and $x(t)-x(s)$ right edges so is monotone.
\end{proof}

In the following, we are going to show that any non-monotone path in $R$ has at least one contractible cave.
Clearly any non-monotone path contains at least one cave because it contains at least two edges in opposite directions.
Let $c$ be a non-contractible cave and $p$ and $q$ be its end vertices.
Suppose that edges inside $c$ incident to $p$ or $q$ are not in $P$ (e.g. $e_1$ and $e_2$ in Figure \ref{fig:c-cave}), because in this case we can easily find a contractible cave with three edges and contract it.
Then $c$ should be blocked by $s$ or $t$ (i.e. $s$ or $t$ are inside $c$) or by some smaller caves.
A cave $c$ is \emph{blocked} by cave $c'$ if all the edges of $c'$ are inside $c$ except its first and last edges.
Figure \ref{fig:c-cave} shows a cave blocked by $t$ and two smaller caves. Note that, edges like $e_3$ and $e_4$ should be in $P$ because each vertex of $P$, except $s$ and $t$, should have two incident edges in $P$.
\begin{figure}
\begin{center}
\footnotesize\centering
\centerline{
\includegraphics[scale=1]{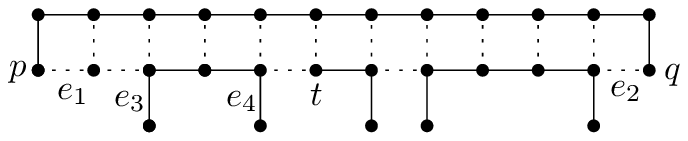}}
\caption{A non-contractible cave blocked by two smaller caves and $t$}
\label{fig:c-cave}
\end{center}
\end{figure}

\begin{lemma} \label{lem:con-cave}
Any non-monotone path in a rectangular grid graph contains at least one contractible cave.
\end{lemma}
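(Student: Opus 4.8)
The plan is to prove the statement by descent (strong induction) on the number of grid vertices lying inside a cave, using the \emph{blocking} relation introduced just before the statement. First I would note that since $P$ is non-monotone it contains two edges in opposite directions, so a minimal subpath witnessing this already furnishes at least one cave $c$ with endpoints $p$ and $q$. If $c$ is contractible we are immediately done, so the entire content of the lemma lives in the case where $c$ is non-contractible, i.e. some vertex of $P$ lies strictly inside $c$. As remarked before the statement, I may also assume that the edges inside $c$ incident to $p$ or $q$ are not in $P$ (the $e_1,e_2$ of Figure \ref{fig:c-cave}); otherwise a contractible three-edge cave is exhibited directly and we are finished.

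The inductive step then rests on the dichotomy suggested by Figure \ref{fig:c-cave}: under the above assumptions, a non-contractible cave $c$ must be blocked either by $s$ or $t$ lying inside it, or by a strictly smaller cave $c'$ nested inside it (every edge of $c'$ except its first and last lying inside $c$). In the latter case I would simply recurse, since $c'$ encloses strictly fewer vertices than $c$; by the induction hypothesis $c'$ contains a contractible cave, and that cave is itself a cave of $P$ contained in $c$, which settles the claim for $c$. The strict drop in enclosed vertices makes this recursion well-founded, so the work reduces to two verifications: that the dichotomy is genuinely exhaustive (every interior path vertex either is $s$ or $t$ or is forced to generate an honest nested cave, using that every non-endpoint vertex of $P$ has exactly two incident path edges, so edges such as $e_3,e_4$ must belong to $P$), and that the base case below can be closed.

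The main obstacle will be the base case, namely a \emph{minimal} cave (one admitting no smaller nested cave) that is nonetheless non-contractible; by the dichotomy such a cave can only be obstructed by $s$ or $t$ sitting inside it. Here I would return to the local three-edge observation recorded before the statement and argue that, once $c$ is minimal and blocked solely by $s$ or $t$, the path near the blocked corner is forced into a configuration exhibiting a contractible cave with only three edges. Pinning down this local rerouting and confirming it always applies is the delicate part; the induction itself is bookkeeping. Combined with Lemma \ref{lem:monotone-shortest}, which identifies monotone paths with shortest paths, this lemma is precisely what will allow a non-shortest path to be shortened by repeated cave contraction, mirroring the role of Lemma \ref{lem:cycle-cave} in the cycle case.
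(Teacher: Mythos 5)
Your setup matches the paper's: both arguments reduce to a non-contractible cave, invoke the three-edge observation to assume the edges inside the cave incident to its endpoints are not in $P$, and then recurse through the blocking relation, which is well-founded because blocking caves are strictly smaller. Up to that point the proposal is fine, and you are right that all of this is bookkeeping.

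The genuine gap is exactly where you yourself flag it: the terminal case in which the chain of blocking caves ends at a cave blocked only by an endpoint of the path. You propose to close this case by a \emph{local} argument, claiming the path near the blocked corner is ``forced into a configuration exhibiting a contractible cave with only three edges,'' but you give no proof, and this is the entire content of the lemma --- Figure \ref{fig:b-cave} exists precisely because nothing local resolves a chain $c_1,\dots,c_k$ in which each cave is blocked by the next and $c_k$ is blocked only by $t$. The paper's resolution is global, not local: it first chooses $c$ to be the cave \emph{nearest to $s$}, which guarantees that $s$ can never appear as a blocker anywhere in the chain; then, letting $v$ be the vertex of $c_k$ adjacent to $t$, it closes the suffix of $P$ from $v$ to $t$ into a cycle $Q$ by adding the grid edge $(t,v)$, and applies Lemma \ref{lem:cycle-cave} (the orthogonal-ear argument) to produce a contractible convex cave of $Q$ avoiding $(t,v)$; finally, the case in which $s$ lies inside $Q$ (where that cave of $Q$ need not be contractible with respect to all of $P$) is eliminated by reversing the path and a planarity argument showing both directions cannot fail simultaneously. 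The cave obtained this way need not be a three-edge cave and need not sit near the blocked corner, so your local claim is not merely unproven --- it is a stronger statement than what the paper establishes or needs.

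Two further consequences of skipping the ``nearest cave to $s$'' device: your base case is strictly harder than the paper's, since you must also dispose of caves blocked only by $s$, and even of caves blocked by both $s$ and $t$ simultaneously; the paper engineers these situations out of existence before the hard case is ever reached. Also, a small wording issue in your inductive step: a blocking cave $c'$ is not \emph{nested inside} $c$ (its two arms stick out on the far side of $c$'s mouth), so a contractible cave found deeper in the recursion is not ``contained in $c$''; the descent is on cave length, not on spatial containment. That slip is harmless, but the missing base case is not: without the cycle construction and Lemma \ref{lem:cycle-cave}, the proof does not go through.
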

\begin{proof}
Let $P$ be a non-monotone path between vertices $s$ and $t$ in a rectangular grid graph $R$.
Then it should contain at least one cave. Let $c$ be the nearest cave to $s$ (i.e. when we start from $s$ and go along $P$, $c$ is the first cave that we encounter). We assume that edges of $R$ inside $c$ and incident to the end vertices of $c$ are not in $P$, otherwise it would be easy to find a contractible cave with three edges. Hence $c$ should be blocked by $s$ or $t$ or some other smaller caves. We can search the blocking caves of $c$ recursively to find a contractible cave. Since the length of blocking caves of a cave is smaller than its length, caves cannot block each other in a cyclic order. So we should finally find a contractible cave except in the case that there is a sequence of caves $c=c_1,c_2,...,c_k$ where $c_i (1\leq i<k)$ is blocked only by $c_{i+1}$ and $c_k$ is blocked only by $t$ (see Figure \ref{fig:b-cave}).
Note that neither $c$ nor its directly or indirectly blocking caves can be blocked by $s$, because we assumed that $c$ is the nearest cave to $s$.

In this case, let 
$v$ be the vertex of $c_k$ adjacent to $t$.
The sub path of $P$ between $v$ and $t$ together with edge $(t,v)$ makes cycle $Q$.
If $s$ is inside $Q$ (Figure \ref{fig:p-cave}(b)) then clearly connecting $s$ to any vertex of $P$ with only one edge of $R$ can not result in a cycle $Q'$ containing $t$ in its inside.
Therefore, without loss of generality we can assume that $s$ is not inside $Q$ (Figure \ref{fig:p-cave}(a)), otherwise we can change the direction of the path $P$ and exchange the roles of $s$ and $t$. Using Lemma \ref{lem:cycle-cave}, we can find a contractible cave in $Q$ not containing edge $(t,v)$. This completes our proof.
\end{proof}

\begin{figure}
\begin{center}
\footnotesize\centering
\centerline{
\includegraphics[scale=1]{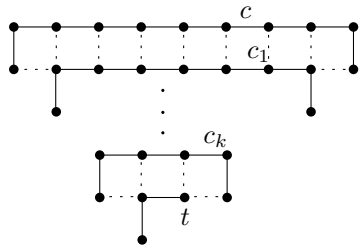}}
\caption{A non-contractible cave $c$ such that none of its directly or indirectly blocking caves are contractible}
\label{fig:b-cave}
\end{center}
\end{figure}

\begin{figure*}
\begin{center}
\footnotesize\centering
\centerline{
\includegraphics[scale=1]{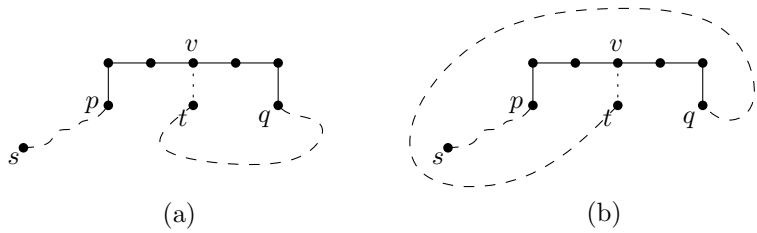}}
\caption{$s$ is in the (a) outside and (b) inside of the cycle obtained by connecting $t$ to $v$ }
\label{fig:p-cave}
\end{center}
\end{figure*}

\begin{theorem}\label{thm:path-k}
Let $s$ and $t$ be two vertices of a rectangular grid graph $R$. There is a path of length $k$ between $s$ and $t$ in $R$ if and only if $l\leq k\leq L$ and $k=l\ (mod\ 2)$, where $l$ and $L$ are respectively lengths of the shortest and the longest paths between $s$ and $t$ in $R$.
\end{theorem}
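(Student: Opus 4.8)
The plan is to prove the two directions separately, treating necessity as essentially immediate and reducing sufficiency to a short iteration built on the lemmas already established in this section.

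For necessity, suppose a path of length $k$ between $s$ and $t$ exists. The bounds $l\leq k\leq L$ hold by the very definitions of $l$ and $L$ as the lengths of the shortest and longest such paths. For the parity condition I would invoke Lemma \ref{lem:odd-even-len}: since $R$ is $2$-colorable, every path between $s$ and $t$ has even length when $s$ and $t$ have different colors and odd length when they share a color. Hence all paths between $s$ and $t$ have one common parity, and since $l$ is itself the length of such a path, any achievable $k$ must satisfy $k\equiv l\ (mod\ 2)$.

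For sufficiency, fix $k$ with $l\leq k\leq L$ and $k\equiv l\ (mod\ 2)$, and start from a longest path $P$ between $s$ and $t$, of length $L$; note that $L\equiv l\ (mod\ 2)$ holds automatically, so $L-k$ is even. Mirroring the cycle argument of Theorem \ref{thm:cycle-k}, the idea is to repeatedly contract a contractible cave, each contraction shortening the current path by exactly two vertices while keeping it a simple path between $s$ and $t$. The key observation driving the iteration is that whenever the current path has length strictly greater than $k$, its length also exceeds $l$ (because $k\geq l$), so it is not a shortest path; by Lemma \ref{lem:monotone-shortest} it is therefore non-monotone, and by Lemma \ref{lem:con-cave} it contains a contractible cave. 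Contracting that cave replaces it with the shortest path between its endpoints, so after $(L-k)/2$ contractions the path has length exactly $k$.

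Since the substantive work is carried by Lemma \ref{lem:con-cave}, the remaining obstacle is mostly bookkeeping, and this is where I would be most careful. I must check that each contraction (i) preserves simplicity and the endpoints $s,t$, which holds because contractibility guarantees the interior vertices of the cave are not already on the path, and (ii) decreases the length by \emph{precisely} two rather than some larger even amount. Point (ii) is the subtle one: I would justify it from the minimality built into the definition of a cave, namely that a minimal cave has depth one, so its interior differs from the replacing shortest path by exactly two vertices. Finally, the parity hypothesis $k\equiv l\equiv L\ (mod\ 2)$ is exactly what ensures that the descending, even-step sequence of lengths lands on $k$ rather than stepping over it, which completes the construction.
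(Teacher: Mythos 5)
Your proof is correct and takes essentially the same route as the paper: necessity from the definitions of $l$ and $L$ together with Lemma \ref{lem:odd-even-len}, and sufficiency by starting from a longest path and repeatedly invoking Lemma \ref{lem:monotone-shortest} and Lemma \ref{lem:con-cave} to find and contract a contractible cave until the length drops to $k$. Your added bookkeeping (that contractibility preserves simplicity and that minimality of a cave forces the reduction to be exactly two) is left implicit in the paper, but it is the same argument.
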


\begin{proof}
Clearly there is no path of any length shorter than $l$ or longer than $L$ between $s$ and $t$. Based on Lemma \ref{lem:odd-even-len} either all of the paths between $s$ and $t$  have even lengths or all of them have odd lengths, so $l\leq k\leq L$ and $k = l\ (mod\ 2)$ are necessary conditions for the existence of a path of length $k$.
For the sufficiency of these conditions note that there is a linear-time algorithm for finding a longest path between $s$ and $t$ in $R$ \cite{keshavarz2010longestingrid}. Furthermore if a path between $s$ and $t$ is not the shortest one, based on Lemma \ref{lem:monotone-shortest} it is non-monotone so by the Lemma \ref{lem:con-cave} it has at least one contractible cave. By contracting this cave we can reduce its length by two. Thus starting from the longest path, and repeatedly finding and contracting a contractible cave we can finally construct a paths of length $k$, if $k$ satisfies the mentioned conditions.
\end{proof}

\section{The algorithms}\label{sec:algs}
We demonstrated the necessary and sufficient conditions for the existence of cycles or paths of length $k$ in a rectangular grid graph.
The proof of the Theorem \ref{thm:cycle-k} is constructive and provides an algorithm for finding a cycle of length $k$ in a solid grid graph $G_g$.
However, it requires O$(|G_g|^2)$ time to execute. Because a longest cycle of $G_g$ can be found in O$(|G_g|)$ time and its length is O$(|G_g|)$.
Similarly, the constructive proof of the Theorem \ref{thm:path-k} provides an algorithm for finding a path of length $k$ between two vertices of a rectangular graph $R$ which requires O$(|R|^2)$ time to execute.
In this section we will present two efficient algorithms for finding such cycles and paths.

First, we show that reducing the length of a cycle $C$ in a solid grid graph $G_g$ to the desired length can be done in time O$(|G_g|)$.
The procedure \emph{ShrinkCycle} in the algorithm \ref{alg1} reduces the length of $C$ by an even integer $0<i \leq |C|-4$, without removing a given edge $e$ from $C$, in O$(|G_g|)$ time.
In this algorithm, $C_{u,v}$ refers to the sub path of $C$ starting from $u$ going in the clockwise direction around $C$ and ending at $v$, where $u$ and $v$ are two vertices of $C$.
The following lemma proves the correctness and time complexity of Algorithm \ref{alg1}.
\newcommand{\LET}{\STATE \textbf{let }}
\newcommand{\SET}{\STATE \textbf{set }}
\newcommand{\PROC}{\STATE \textbf{procedure }}
\begin{algorithm*}[t]
\caption{Reducing length of a cycle}
\label{alg1}
\begin{algorithmic}
\PROC ShrinkCycle($C$,$e$,$i$)
\STATE /* $C$ is a cycle in a solid grid graph, and $e$ is a given edges of $C$ */
\STATE /* $i$ is an even integer such that $0\leq i \leq |C|-4$ */
\end{algorithmic}
\begin{algorithmic}[1]
\LET $e$=$(t,s)$, and $s$ is next to $t$ in clockwise order around $C$
\WHILE{$i > 0$}
\STATE starting from $s$ scan $C$ in clockwise order to find the first convex cave $c$ not containing $e$
    \IF{$c$ is contractible}
        \STATE contract $c$
        \STATE $i \leftarrow i-2$
    \ELSE
        \LET $p$ be the start vertex of $c$ around $C$ in clockwise order
        \LET $v$ be the vertex of $C$ inside $c$ closest to $p$
        \LET $u$ be the adjacent vertex of $v$ in $c$ different from $p$
        \IF{$|C_{u,v}|-2 \leq i$}
            \STATE Remove $C_{u,v}$ from $C$ and add edge $(v,u)$ to $C$
            \STATE $i \leftarrow i-(|C_{u,v}|-2)$
        \ELSE
            \LET $Q$ be the cycle formed by $C_{u,v}$ and edge $(v,u)$
            \STATE ShrinkCycle($Q$,$(v,u)$,$i$)
        \ENDIF
    \ENDIF
\ENDWHILE
\end{algorithmic}
\end{algorithm*}

\begin{lemma}\label{lem:cycle-shrink}
Let $C'$ be a cycle in a solid grid graph $G_g$, and $e$ be a given edge of $C'$.
Then we can find a cycle $C$ of length $k<|C'|$ containing $e$ in O$(|G_g|)$ time, if it exists.
\end{lemma}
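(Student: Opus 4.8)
The plan is to reduce the lemma to establishing two facts about the procedure \emph{ShrinkCycle} of Algorithm \ref{alg1}: that a call \emph{ShrinkCycle}$(C',e,|C'|-k)$ correctly returns a cycle of length $k$ that still contains $e$, and that the whole computation runs in O$(|G_g|)$ time. First I would dispose of the existence question. Since $G_g$ is a solid grid graph, Lemma \ref{lem:cycle-cave} guarantees that every cycle of more than four vertices contains a contractible convex cave avoiding any prescribed edge, and contracting it lowers the length by two while keeping $e$. Iterating this from $C'$ shows that a cycle of length $k$ through $e$ exists if and only if $k$ is even and $4\le k\le |C'|-2$ (the lower bound and the parity coming from Lemma \ref{lem:odd-even-len}). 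These are exactly the conditions under which $i:=|C'|-k$ is even and satisfies $0<i\le |C'|-4$, i.e. the precondition of Algorithm \ref{alg1}; otherwise the algorithm reports that no such cycle exists. So it remains to analyse \emph{ShrinkCycle}.

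For correctness I would argue by induction on $|C|$, maintaining the invariant that $i$ is even with $0\le i\le |C|-4$. Each pass of the while loop scans clockwise from $s$ for the first convex cave $c$ not containing $e$; one exists because $|C|\ge i+4\ge 6$ and Lemma \ref{lem:cycle-cave} already supplies a contractible convex cave avoiding $e$. If $c$ is contractible, contracting it removes two vertices, preserves $e$, and decreases $i$ by two, so the invariant persists. If $c$ is not contractible it traps part of $C$ in its interior; the vertices $v$ and $u$ single out the innermost trapped arc $C_{u,v}$, and because the grid is $2$-colourable the endpoints $u,v$ receive opposite colours, so $|C_{u,v}|$ is even and $|C_{u,v}|-2$ is a legal even decrement. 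Either this arc is short enough to delete outright (replacing it by the chord $(v,u)$, again leaving $e$ untouched since $e$ lies outside the cave), or we recurse on the strictly smaller cycle $Q$ formed by $C_{u,v}$ and $(v,u)$ with the same $i$, where now $i\le |Q|-4$ so the inductive hypothesis applies and the preserved edge $(v,u)$ guarantees that the untouched part of $C$ carrying $e$ is spliced back correctly. Hence the loop terminates with $i=0$ and a cycle of length $k$ through $e$.

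The step I expect to be the real obstacle is the O$(|G_g|)$ time bound, since a literal reading of the loop --- rescanning from $s$ on every iteration --- only yields a quadratic estimate. The plan is to amortise the scanning. The key geometric observation I would prove is that a prefix of the current scan that is free of convex caves stays free of them: every contraction or arc deletion the algorithm performs happens strictly clockwise-ahead of the current position, and such local surgery on an orthogonal polygon cannot change the inside/outside classification near the already-cleared prefix, hence cannot create a convex cave there (any newly created cave sits at the junction, O$(1)$ steps back). This lets the scan pointer advance essentially monotonically rather than restart from $s$, so the scanning performed at a single recursion level is linear in the length of that cycle. I would then show that the recursive calls act on nested interior arcs $C_{u,v}$ that are disjoint from the prefix already scanned at the parent level, and that each vertex removed by a contraction or an arc deletion is charged exactly once; combined with the fact that a convex cave, the test for its contractibility, and the location of $v$ and $u$ can each be examined in time proportional to its own size (using the solidity of $G_g$ and a record of which grid cells $C$ occupies), this bounds the total work by a constant times the number of vertices, namely O$(|G_g|)$. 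Finally, prepending the linear-time construction of $C'$ and the constant-time feasibility check yields the claimed overall running time.
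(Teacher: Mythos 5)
Your proposal is correct and follows essentially the same route as the paper: it analyzes the paper's own procedure \emph{ShrinkCycle}, proves its correctness by induction on $|C|$ (contract a contractible convex cave avoiding $e$, or else cut off / recurse on the inner cycle $Q$ formed by $C_{u,v}$ and the preserved edge $(v,u)$), and obtains the O$(|G_g|)$ bound by an amortized, essentially monotone scan together with charging the contraction work to vertices that permanently leave the region enclosed by the cycle. The paper phrases the charging as the number of $G_g$-vertices inside $C$ being decreased by each contraction and never increasing again, while you charge each removed vertex once and make explicit a few points the paper glosses over (parity of $|C_{u,v}|$, splicing the shrunk $Q$ back via the preserved edge), but this is the same amortization idea rather than a genuinely different argument.
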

\begin{proof}
We prove the lemma by showing that the procedure \emph{ShrinkCycle} of Algorithm \ref{alg1} works correctly and runs in time O$(|G_g|)$,
when its preconditions is satisfied (i.e. its parameter $C$ is a cycle in a solid grid graph and $i$ is an even integer and $0 \leq i\leq |C|-4$).
By calling the \emph{ShrinkCycle} with parameters $C'$, $e$ and $i=|C'|-k$ we can construct the desired cycle $C$.
In this procedure, if $|C|=4$ then $i$ should be zero, so we can assume that $|C|>4$.
Based on the Lemma \ref{lem:cycle-cave}, there is at least one contractible convex cave not containing $e$ on $C$.
Hence, in line 3, $c$ always can be found.
If $c$ is contractible, then its contraction reduces the length of $C$ by two, and if it is not contractible, its general situation is illustrated in Figure \ref{fig:general-convex-cave}.
In this case, let $v$ be the vertex of $C$ inside $c$ closest to $p$ and $u$ be the adjacent vertex of $v$ in $c$ different from $p$. If $|C_{u,v}|-2 \leq i$, then line 12 of the procedure,
by removing $C_{u,v}$ from $C$ and adding edge $(u,v)$ to it, decreases its length by $|C_{u,v}|-2$.
Otherwise, let $Q$ be the cycle formed by sub path $C_{u,v}$ nad edge $(v,u)$. The procedure recursively reduces the length of cycle $Q$ by $i$ at line 16.
Clearly, at this line $(v,u)$ is an edge of $Q$ and $0\leq i \leq |Q|-4$ so the preconditions of the procedure \emph{ShrinkCycle} are satisfied.
By induction on the length of $C$, we can assume that the recursive call at line 16 works correctly.
Because, $|Q|<|C|$ and when $|C|=4$ ($i=0$), the procedure trivially works correctly.
Therefore, Algorithm \ref{alg1} works correctly.

The \emph{while} statement of line 2 of the procedure can iterate at most O$(|C|)$ times (including the iterations of the recursive calls), because in each iteration either $i$ or $|C|$ decreases.
Storing the track of the scanned part of $C$, we can find all the consecutive convex caves of a cycle in one scan.
In other words, let $w$ be the last scanned vertex of $C$ and let $v_1=s,v_2,...,v_l=w$ be the ordered list of vertices of $C_{s,w}$ in which direction of the edges change, together with $s$ and $w$.
For each consecutive pair of vertices of this list, $v_j$ and $v_{j+1}$, let $V_j$ be the set of vertices of $C$ lying to the right of the line segment connecting $v_j$ and $v_{j+1}$ and adjacent to some vertices of $C_{v_j,v_{j+1}}$. Also let $E_j$ be the set of edges of $C$ whose end vertices are in $V_j$. We can construct and maintain these lists while scanning $C$ (starting from $s$) and contracting some of its caves. Using vertices $v_{l-3},..,v_l$ and the sets $V_{l-2}$ and $E_{l-2}$ one easily can check that if there is a
cave ending at $w$, and whether it is contractible or not in time O$(1)$.
Hence, the total time spent at lines 3 and 4 of the procedure (including the recursive calls) is O$(|C|)$.

The total time spent at line 5 can not exceed O$(|G_g|)$, because contracting a cave $c$ requires O$(|c|)$ time and when $c$ is contracted, the number of vertices of $G_g$ lying inside $C$ is decreased by $|c|-2$, and it is never increased again.
Other lines of this procedure can be implemented in time O$(1)$.
Therefore, overall running time of the algorithm is O$(|G_g|)$.
\end{proof}


\begin{figure}
\begin{center}
\footnotesize\centering
\centerline{
\includegraphics[scale=1]{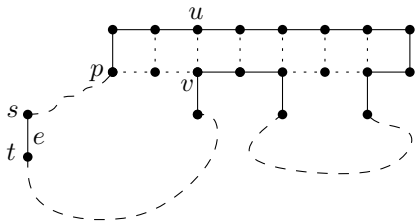}}
\caption{A general case of a non-contractible convex cave in a cycle}
\label{fig:general-convex-cave}
\end{center}
\end{figure}

The size of $R$ can be arbitrarily larger than $k$. So applying the Algorithm \ref{alg1} to a longest cycle of $R$ may require more than O$(k)$ time.
However, instead of $R$, we can use a rectangular subgraph of $R$ having O$(k)$ size.
For example, for a rectangular grid graph $R(m,n)$ with $m\geq n$ and $m\times n > k$, let $R'$ be the induced rectangular subgraph of $R(m,n)$ with vertex set $V'$, where:

\noindent  \small
$V'=\begin{cases}
     \text{\small{$ \{v| \ 1 \leq x(v),y(v) \leq \lceil \sqrt{k+1} \rceil \} $}}, &\text{\small{if $n\geq \lceil \sqrt{k+1} \rceil\text{,} $}} \\
     \text{\small{$ \{v|\ 1 \leq x(v) \leq n, 1\leq y(v),$}} \\ \text{\small{$\ \ y(v) \leq \text{min}(m,\lceil(k+1)/n \rceil)\}$}},& \text{otherwise.}
      \end{cases}$
Clearly, $R'$ has a cycle of length $k$ and $|R'|$ is at most \small{$\lceil \sqrt{k+1} \rceil^2 = k+2\sqrt{k+1}+3$}.

\begin{theorem}\label{thm:cycle-time}
In a rectangular grid graph $R$, a cycle of length $k$ can be found in O$(k)$ time, if it exists.
\end{theorem}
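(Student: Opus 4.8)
The plan is to exploit the procedure \emph{ShrinkCycle}, whose correctness and O$(|G_g|)$ running time were established in Lemma \ref{lem:cycle-shrink}, together with the observation made immediately before the statement that a cycle of length $k$ already lives inside a small rectangular subgraph of $R$. The first step is to verify, in O$(1)$ time, the conditions of Theorem \ref{thm:cycle-k}: namely $m,n>1$, $k$ even, and $4\leq k\leq m\times n$. If any of these fails, no cycle of length $k$ exists and we report this. The difficulty is that $|R|$ may be arbitrarily larger than $k$, so we cannot afford to build a longest cycle of $R$ and shrink it, since a direct application of Lemma \ref{lem:cycle-shrink} would then cost O$(|R|)$ time, which need not be O$(k)$.

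To overcome this, I would pass to the induced rectangular subgraph $R'$ defined just above the theorem, whose vertex set was chosen so that $|R'|$ is at most $\lceil\sqrt{k+1}\,\rceil^2=$ O$(k)$ while still $|R'|>k$. Since $R'$ is itself a rectangular grid graph with both side lengths greater than one and with $4\leq k\leq|R'|$ and $k$ even, Theorem \ref{thm:cycle-k} guarantees that $R'$ contains a cycle of length $k$. The next step is to construct a longest cycle $C'$ of $R'$ using the explicit pattern of Lemma \ref{lem:longest-cycle} (Figure \ref{fig:longest-cycle}); this pattern can be traced visiting each vertex of $R'$ a constant number of times, so it runs in O$(|R'|)=$ O$(k)$ time, and the resulting $|C'|$ is even with $k\leq|C'|\leq|R'|$.

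Finally, I would invoke Lemma \ref{lem:cycle-shrink} on the solid grid graph $R'$, calling \emph{ShrinkCycle} with the cycle $C'$, an arbitrary edge $e$ of $C'$, and reduction amount $i=|C'|-k$. Here $i$ is a nonnegative even integer, being the difference of two even numbers, and $i\leq|C'|-4$ because $k\geq 4$, so the preconditions of the procedure are met and it returns a cycle of length exactly $k$ in O$(|R'|)=$ O$(k)$ time. Summing the three stages yields a total running time of O$(k)$. The main point to get right is the second step: one must confirm both that $R'$ is small enough ($|R'|=$ O$(k)$) and still large enough to contain the target cycle, since it is exactly this restriction that converts the O$(|G_g|)$ bound of Lemma \ref{lem:cycle-shrink} into the desired O$(k)$ bound.
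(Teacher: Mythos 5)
Your proposal is correct and follows essentially the same route as the paper: restrict to the O$(k)$-size rectangular subgraph $R'$ defined before the theorem, build its longest cycle via Lemma \ref{lem:longest-cycle}, and shrink it to length $k$ with Lemma \ref{lem:cycle-shrink} in O$(|R'|)=$ O$(k)$ time. You merely make explicit some details the paper leaves implicit (the O$(1)$ feasibility check from Theorem \ref{thm:cycle-k} and the verification that $i=|C'|-k$ satisfies the preconditions of \emph{ShrinkCycle}), which is fine.
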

\begin{proof}
First we find a rectangular subgraph $R'$ of $R$ of size O$(k)$ having more than $k$ vertices.
Then by Lemma \ref{lem:cycle-shrink} we can construct a cycle of length $k$ from the longest cycle of $R'$ in O$(k)$ time.
\end{proof}

Algorithm \ref{alg1} also can be used to shrink cycles in solid grid graphs. For example, Figure \ref{fig:ex-fig}(b) shows a cycle of length 58 on a solid grid graph which is obtained from its Hamiltonian path (Figure \ref{fig:ex-fig}(a)) by applying the procedure ShrinkCycle of Algorithm \ref{alg1}.
In the remaining of this section, we will show that a path of length $k$ between two vertices of a rectangular grid graph can be found in O$(k^2)$ time.

\begin{lemma}\label{lem:path-shrink}
In a rectangular grid graph $R$, having a path $P$ between two vertices $s$ and $t$, we can find a path of the given length $k<|P|$ between $s$ and $t$ in O$(|P|^2)$ time, if it exists.
\end{lemma}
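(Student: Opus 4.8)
The plan is to mirror the cycle-shrinking strategy of Algorithm \ref{alg1}, but account for the fact that a path has two distinguished endpoints $s$ and $t$ rather than a closed loop. By Theorem \ref{thm:path-k}, a path of length $k$ exists exactly when $l \leq k \leq |P|$ and $k \equiv l \pmod 2$, so I would first check these conditions and then repeatedly reduce the length of $P$ by two until it reaches $k$. Lemma \ref{lem:con-cave} guarantees that any path that is not already shortest (equivalently, by Lemma \ref{lem:monotone-shortest}, any non-monotone path) contains a contractible cave, whose contraction reduces the length by exactly two. The task is therefore to locate and contract such a cave efficiently, and to bound the total work across all $\tfrac{|P|-k}{2}$ contractions.

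First I would scan $P$ from $s$ to find the nearest cave $c$, exactly as in the proof of Lemma \ref{lem:con-cave}. If $c$ is already contractible, I contract it directly. Otherwise $c$ is blocked, and I would recursively follow the chain of blocking caves $c = c_1, c_2, \ldots$; since each blocking cave is strictly shorter than the one it blocks, this recursion terminates in a contractible cave unless the chain terminates at a cave $c_k$ blocked by $t$. In that exceptional case I would invoke the construction from Lemma \ref{lem:con-cave}: let $v$ be the vertex of $c_k$ adjacent to $t$, form the cycle $Q$ from the sub path of $P$ between $v$ and $t$ together with edge $(t,v)$, and (after possibly swapping the roles of $s$ and $t$ so that $s$ lies outside $Q$) apply Lemma \ref{lem:cycle-cave} to extract a contractible convex cave in $Q$ avoiding $(t,v)$. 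Contracting that cave keeps $s$ and $t$ fixed and decreases the path length by two.

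For the running time I would argue that a single reduction step costs at most O$(|P|)$: scanning for the nearest cave, following the blocking chain, and constructing the auxiliary cycle $Q$ each touch only O$(|P|)$ vertices, and the contraction itself is linear in the size of the contracted cave. Since we perform at most $\tfrac{|P|-k}{2} = $ O$(|P|)$ reduction steps, the total is O$(|P|^2)$. Unlike the cycle case of Lemma \ref{lem:cycle-shrink}, I would not attempt the amortized single-scan bookkeeping here, because after each contraction the endpoint structure and the blocking relationships can change near $s$, so I would simply re-scan from scratch each time and accept the quadratic bound.

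The main obstacle I expect is the correctness of the exceptional case rather than the timing. Specifically, when the blocking chain ends at $t$, I must verify that the auxiliary cycle $Q$ genuinely contains a contractible cave whose contraction (i) does not disturb the endpoints $s$ and $t$, and (ii) actually yields a shorter \emph{path} between $s$ and $t$ rather than breaking connectivity. The argument that $s$ can always be assumed outside $Q$ (Figure \ref{fig:p-cave}) is the delicate point: I would need to confirm that the edge $(t,v)$ excluded by Lemma \ref{lem:cycle-cave} is precisely the edge that must be preserved so that the contracted cycle reopens into a valid $s$--$t$ path. Once that correspondence is nailed down, the induction on $|P|$ closes exactly as in Lemma \ref{lem:cycle-shrink}.
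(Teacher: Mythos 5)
Your proposal is correct and follows essentially the same route as the paper: the paper's proof simply invokes Lemma \ref{lem:con-cave} to guarantee a contractible cave in any non-shortest path, finds and contracts it in O$(|P|)$ time per step, and performs O$(|P|)$ contractions to reach length $k$, giving the O$(|P|^2)$ bound. The additional detail you work through (scanning from $s$, the blocking chain, and the auxiliary cycle $Q$ with Lemma \ref{lem:cycle-cave}) is precisely the internal machinery of Lemma \ref{lem:con-cave}, so the correctness worries you raise about the exceptional case are already discharged by citing that lemma as a black box rather than re-deriving it.
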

\begin{proof}
By Lemma \ref{lem:con-cave}, while $P$ is not a shortest path it contains at least a contractible cave.
Such a cave can be found in O$(|P|)$ time, and we can contract at most O$(|P|)$ caves. Therefore, if a path of length $k$ between $s$ and $t$ exists, it can be found in O$(|P|^2)$ time.
\end{proof}

One may think it is easy to reduce the time complexity of lemma \ref{lem:path-shrink} to O$(|P|)$, similar to Lemma \ref{lem:cycle-shrink}.
But note that, while shrinking a cycle, the algorithm contracts only convex caves, so it sweeps the interior of the cycle only once.
However, there is no convex caves on paths, and contracting caves may cause some parts of the graph to be swept multiple times.
\begin{theorem}\label{thm:path-time}
In a rectangular grid graph $R$, a path of length $k$ between two given vertices $s$ and $t$ can be found in O$(k^2)$ time, if it exists.
\end{theorem}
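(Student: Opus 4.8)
The plan is to reduce the problem, exactly as in the cycle case (Theorem \ref{thm:cycle-time}), to producing a \emph{short} initial path and then trimming it with Lemma \ref{lem:path-shrink}. The crucial observation is that the bound O$(|P|^2)$ in Lemma \ref{lem:path-shrink} depends only on the length of the path $P$ being shrunk and \emph{not} on $|R|$: finding a contractible cave costs O$(|P|)$ and there are O$(|P|)$ contractions. Hence it suffices to construct, in O$(k^2)$ time, some $s$--$t$ path $P_0$ with $k\leq |P_0|=$ O$(k)$ and $|P_0|\equiv k\ (mod\ 2)$; contracting its caves down to length $k$ then costs O$(|P_0|^2)=$ O$(k^2)$. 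First I would dispose of the trivial tests: compute $l=|x(s)-x(t)|+|y(s)-y(t)|+1$ in O$(1)$ time and, by Theorem \ref{thm:path-k}, report failure unless $l\leq k$ and $k\equiv l\ (mod\ 2)$.

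The main obstacle, and the reason the cycle reduction does not transfer verbatim, is that $P_0$ must join the two \emph{fixed} endpoints $s$ and $t$. Every rectangular subgraph of $R$ containing both of them contains their bounding box $B$, whose side lengths $a=|x(s)-x(t)|+1$ and $b=|y(s)-y(t)|+1$ satisfy only $a+b=l+1\leq k+1$; consequently $|B|=ab$ can be as large as $\Theta(k^2)$, e.g.\ when $s$ and $t$ are antipodal near the centre of a large grid. So, unlike for cycles, one cannot in general enclose $s$ and $t$ in an O$(k)$-size rectangle, and starting from a longest $s$--$t$ path of $R$ (of length up to $\Theta(|R|)$) would make Lemma \ref{lem:path-shrink} far too slow. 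I would therefore split on the size of $B$.

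If $ab<k$, then $|B|=$ O$(k)$ is small, and I would grow $B$ inside $R$ into a rectangle $R'$ by repeatedly appending a full row or column to whichever free side increases the area the least, stopping as soon as $|R'|$ just exceeds $k$ (extending by a constant number of extra lines to settle the colour parity of $s$ and $t$, so that the longest $s$--$t$ path length $L_{R'}\geq k$). A routine argument shows the final increment, hence the overshoot, is O$(\sqrt{k})$, so $|R'|=$ O$(k)$; the room is always available since a length-$k$ path can exist only if $mn\geq k$, and if the growth reaches $R'=R$ while $L_R<k$ I would report that no such path exists. Since $R'\supseteq B$ gives $l'=l\leq k\leq L_{R'}$ with matching parity, Theorem \ref{thm:path-k} applied to $R'$ guarantees a length-$k$ path, and I would take $P_0$ to be a longest $s$--$t$ path of $R'$, found in O$(|R'|)=$ O$(k)$ time by the algorithm of Keshavarz et al.\ \cite{keshavarz2010longestingrid}.

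The harder case is $ab\geq k$, where $B$ itself already admits an $s$--$t$ path of length $k$ (Theorem \ref{thm:path-k} applied to $B$) but is far too large to afford building its longest path and shrinking it. Here I would construct $P_0$ \emph{directly} inside $B$: beginning at $s$, run a boustrophedon pattern that sweeps columns (or rows) back and forth to accumulate length, and once enough vertices have been used, close off to $t$ along a monotone staircase; stopping the sweep at the correct column makes $|P_0|$ exceed $k$ by at most one column, i.e.\ by O$(\max(a,b))=$ O$(k)$, while keeping $P_0$ simple and ending exactly at $t$. Verifying that this pattern does yield a simple $s$--$t$ path of length in $[k,$ O$(k)]$ for every placement of $s$ and $t$ at opposite corners of $B$ and for every admissible parity is the step I expect to be most delicate. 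In either case I would finish by applying Lemma \ref{lem:path-shrink} to $P_0$, for a total cost of O$(k)+$ O$(|P_0|^2)=$ O$(k^2)$.
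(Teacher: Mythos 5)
Your high-level reduction is exactly the paper's: verify the distance and parity conditions, build a starting $s$--$t$ path $P_0$ with $k\leq |P_0|$ and $|P_0|\in$ O$(k)$, then shrink it with Lemma \ref{lem:path-shrink}; and your small-bounding-box case ($ab<k$) matches the paper's first case. The genuine gap is in your case $ab\geq k$, precisely the step you flagged as delicate. A boustrophedon that sweeps columns and then closes to $t$ by a monotone staircase can only realize lengths of the form $a+j(b-1)$ with $j$ odd (stopping mid-column fills in intermediate values of the right parity, but never larger ones); when $a$ is even, the last column cannot be swept without ending at the corner opposite to $t$, so column sweeps top out at $ab-b+1$, and symmetrically row sweeps top out at $ab-a+1$. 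Hence when $a$ and $b$ are both even your pattern never exceeds $ab-\min(a,b)+1$, while the longest $s$--$t$ path in the box $B$ has length exactly $ab-1$ (both corners have the same color, so lengths are odd, and the bound of \cite{keshavarz2010longestingrid} gives $ab-2$, hence $ab-1$). Concrete failure: $s$ and $t$ at opposite corners of a $4\times 4$ bounding box inside a large grid, $k=15$. You are in your second case ($ab=16\geq 15$) and a path of length $15$ exists, e.g. $(1,1),(1,2),(1,3),(2,3),(2,4),(3,4),(3,3),(3,2),(2,2),(2,1),(3,1),(4,1),(4,2),(4,3),(4,4)$, but your construction cannot produce any $P_0$ of length at least $15$, so the algorithm fails on a feasible instance. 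Closing this for all $k$ up to the longest-path length would amount to re-deriving the corner-to-corner constructions of \cite{keshavarz2010longestingrid}, which is what the sweep was meant to avoid.

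Two repairs exist. The cheap one is to move your threshold: split on $ab\leq 2k$ versus $ab>2k$. If $ab\leq 2k$ then $|B|\in$ O$(k)$, so your first-case treatment (grow $B$ until its size just exceeds $k+2$, if needed, then take a longest $s$--$t$ path) applies verbatim; if $ab>2k$ then, since $\min(a,b)\leq l\leq k$, the sweep's ceiling satisfies $ab-\min(a,b)+1> 2k-k+1>k$, so it has enough slack and your overshoot argument goes through. The paper's own device is different and avoids constructing any path inside the large box: it takes an O$(k)$-size rectangle $R'$ containing only $s$, lets $v$ be the vertex of $R'$ closest to $t$, and sets $P_0$ to be a longest $s$--$v$ path in $R'$ (of length greater than $k$ because $|R'|>k+2$) concatenated with a shortest $v$--$t$ path of $R$; this $P_0$ has length at least $k$ and in O$(k)$ with no case analysis on sweep patterns at all, after which Lemma \ref{lem:path-shrink} finishes as in your plan.
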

\begin{proof}
We show how to construct a path $P$ between $s$ and $t$ such that $|P|\geq k$ and $|P|\in$ O$(k)$.
Then, by using $P$ and applying Lemma \ref{lem:path-shrink} we can construct a path of length $k$ in O$(k^2)$ time.
If there is a rectangular subgraph $R'$ of $R$ including $s$ and $t$ such that $|R'|> k+2$ and $|R'|\in$ O$(k)$, then we let $P$ be a longest path between $s$ and $t$ in $R'$ (see Figure \ref{fig:path-o-k}(a)).
Otherwise let $R'$ be a rectangular subgraph of $R$ including $s$ such that $|R'|> k+2$ and $|R'|\in$ O$(k)$ and let $v$ be a closest vertex of $R'$ to $t$ distinct from $s$.
Then, there is a longest path between $s$ and $v$ in $R'$ and a shortest path between $v$ and $t$ in $R$ such that these two paths have no edge in common. Let $P$ be the concatenation of these two paths (see Figure \ref{fig:path-o-k}(b)).
Because the length of a longest path between two vertices of a rectangular grid graph $R(m,n)$ is at least $m \times n -2$ \cite{keshavarz2010longestingrid}, in both cases $|P|>k$ and $|P| \in\text{O}(k)$. This completes our proof.
\end{proof}

\begin{figure}
\begin{center}
\footnotesize\centering
\centerline{
\includegraphics[scale=.9]{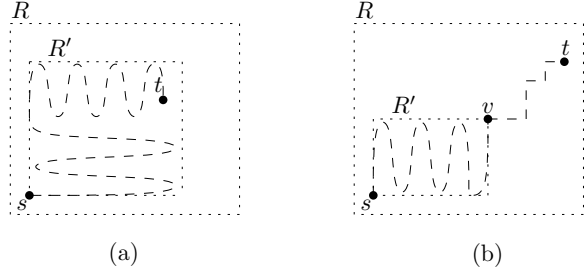}
}
\caption{In both (a) and (b) the dashed line shows the desired path of Theorem \ref{thm:path-time} }
\label{fig:path-o-k}
\end{center}
\end{figure}

\begin{figure*}
\begin{center}
\footnotesize\centering
\centerline{
\includegraphics[scale=.65]{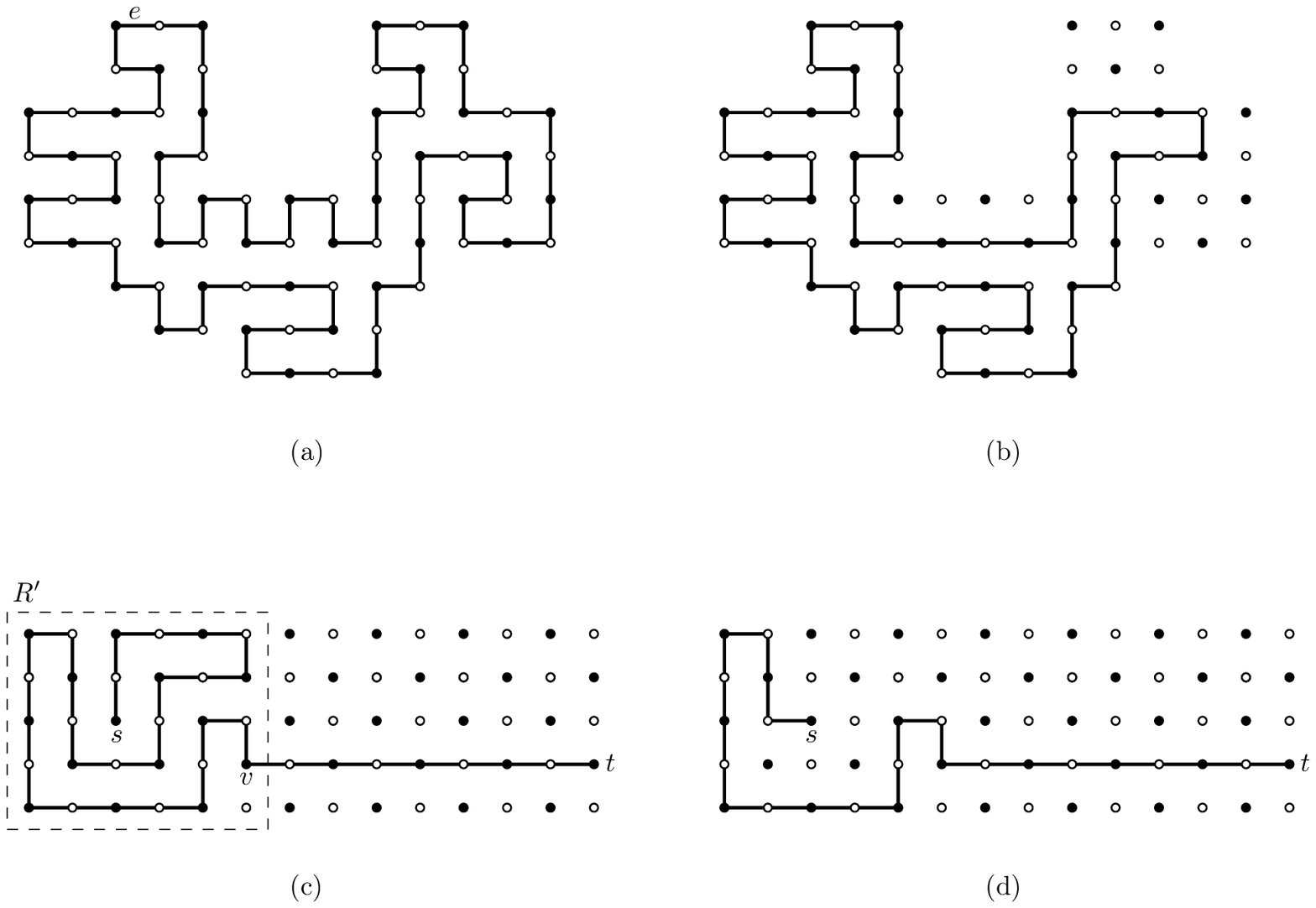}}
\caption{(a) a solid grid graph with its Hamiltonian cycle,  (b) a cycle of length 58 obtained by applying the procedure ShrinkCycle on the Hamiltonian cycle, (c) a path $P$ obtained by joining a longest path between $s$ and $v$ in $R'$ and the shortest path between $v$ and $t$, (d) a path of length 25 between $s$ and $t$ obtained from $P$ using Theorem  \ref{thm:path-time}}
\label{fig:ex-fig}
\end{center}
\end{figure*}
See Figures \ref{fig:ex-fig}(a) and (b) for examples.

\section{Extending the results to 3D grids}\label{sec:3d}
The problems of embedding length $k$ cycles and length $k$ paths
between two given vertices $s$ and $t$ in a  3D grid graph $R(m,n,o)$
can be solved using the introduced algorithms for solving these problems in rectangular grid grpaphs.

To this aim, consider a spanning subgraph of $R$ illustrated in Figure \ref{fig:3d}(b).
This subgraph is isomorphic to the rectangular grid graph $R'(m,no)$ using the bijective mapping
$F: V(R)\mapsto V(R')$ defined by the following:

\begin{figure*}
\begin{center}
\footnotesize\centering
\centerline{
\includegraphics[scale=.95]{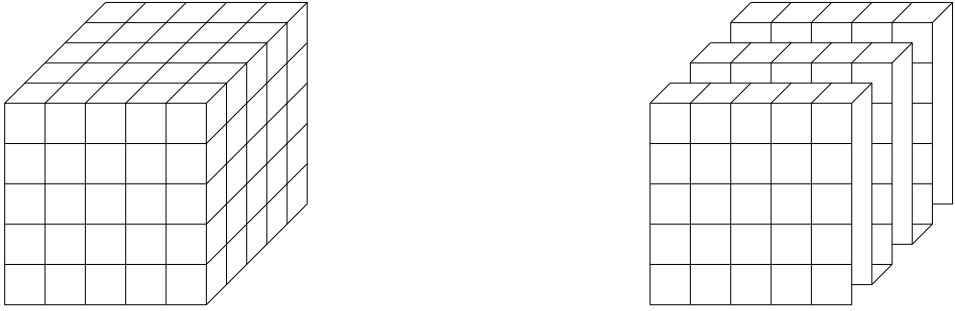}}
\caption{(a) a 3D grid graph $R(m,n,o)$ and (b) a subgraph of $R$ isomorphic to rectangular grid graph $R'(m,no)$ }
\label{fig:3d}
\end{center}
\end{figure*}

\begin{center}
\parbox[][][c]{.5\textwidth}{
$x(F(v)) = x(v),$

$y(F(v))=
\begin{cases}
n(z(v)-1)+y(v) & \mbox{if  $z(v)$ is odd,} \\
nz(v)-y(v)+1 & \mbox{if  $z(v)$ is even.}
\end{cases}$
}
\end{center}

Since, 3D grid graphs are 2-colorable as rectangular grid graphs,
the stated conditions for the existance of the desired cycles and paths in $R'$ are true also for $R$. Similarly,  the upper bounds proposed in \cite{keshavarz2010longestingrid} for the length of longest cycle and length of longest path between two given vertices in $R'$ is true also for $R$.
Thus, any longest cycle in $R'$ indicates a longest cycle in $R$ and any longest path between $F(s)$ nad $F(t)$ in $R'$ indicates a longest path between $s$ and $t$ in $R$.
Moreover, length of shortest cycles in both $R$ and $R'$ is four.
Thus, the mapping of any cycle of length $k$ in $R'$ using $F^{-1}$ is a cycle of length $k$ in $R$.
However, the same argument is not true about the paths of length $k$,
because the length of the shortest paths between $s$ and $t$ in $R$ may be less than the length of the shortests path between $F(s)$ and $F(t)$ in $R'$.

For finding a path of length $k$ between the vertices  $s$ and $t$ in $R$,
without loss of generality, let $z(t) \ge z(s)$ and
let $t_0$$=$$t, t_1,.. ,t_{z(t)-z(s)}$
be vertices of $R$ such that $x(t_i)=x(t)$,  $y(t_i)=y(t)$ and $z(t_i)=z(t)-i$
for $0\le i \le z(t)-z(s)$.
Also, without loss of generality, we can assume that $t_{z(t)-z(s)}$ is distinct from $s$,
 because as otherwise we can swap the roles of $y$ and $z$ cordinates in the defination of bijection $F$.

Let  $t_j$ be the first vertex among  $t_0, t_1,.. ,t_{z(t)-z(s)}$ such that
the length of the shortest path between $F(s)$ and $F(t_j)$ in $R'$ is not more than $k-j$. Such a $t_j$ exists if $k$ is not less than the length of the shortest path between $s$ and $t$, because the length of the shortest path between $F(s)$ and $F(t_{z(t)-z(s)})$ in $R'$ is equal to the length of the shortest path between $s$ and $t$ in $R$ minus $j$.
Now, if $P$ be a path of length $k-j$ between  $F(s)$ and $F(t_j)$ in $R$
(found by the algorithm proposed in Section \ref{sec:algs}) the concatenation
of the  mapping of $P$ using $F^{-1}$ and the sequence of vertices $t_{j-1},.. ,t_0$%
$=$$t$ results a path of length $k$ between $s$ and $t$ in $R$.
Note that, the mapping of $P$ using $F^{-1}$ will not contain any of vertices $t_{j-1},.. ,t_0$, as otherwise the $t_j$ is not the first vertex among  $t_0,.. ,t_{z(t)-z(s)}$ with the desired property which is a contradiction.

\section{Conclusions}\label{sec:concl}

In this paper we first give the necessary and sufficient conditions for the existence of cycles and paths of a given length $k$, between two given vertices in a rectangular grid graph. Using these conditions one can verify in O$(1)$ time whether such a path or cycle exits. We next introduced an algorithm for finding such cycles in optimal time O$(k)$, and an algorithm for finding such paths in O$(k^2)$ time. However, the problems of finding paths and cycles of a given length $k$ is still open for solid grid graphs.

\bibliographystyle{plain}
\bibliography{k-path}
\label{sec:biblio}

\end{document}